\newcommand{\coeff}{\mathrm{coeff}}
\newcommand{\DFT}{\mathrm{DFT}}
\newcommand{\poly}{\mathrm{poly}}
\newcommand{\SP}{\mathrm{span}}
\newcommand{\FF}{\mathbb{F}}
\newcommand{\ZZ}{\mathbb{Z}}
\newcommand{\CC}{\mathbb{C}}
\newcommand{\FG}{{\FF[G]}}
\newcommand{\GF}{\mathrm{GF}}
\newcommand{\abs}[1]{\lvert #1 \rvert}
\newcommand{\weff}{w_{\mathrm{eff}}}
\begin{document}
%\mainmatter              % start of the contributions
\title{Finding the Minimum-Weight $k$-Path}

%\author{
%Orgad Keller\thanks{Department of Computer Science, Bar-Ilan
%University, Ramat-Gan 52900, Israel; \texttt{\{kellero, kopelot,
%landaus2, moshe\}@cs.biu.ac.il}} \and Tsvi
%Kopelowitz\footnotemark[\value{footnote}] \and Shir
%Landau\footnotemark[\value{footnote}] \and Moshe
%Lewenstein\footnotemark[\value{footnote}]}
%%\date{}

%\author{
%Orgad Keller \and Tsvi Kopelowitz \and Shir Landau \and Moshe Lewenstein}
%\institute{Department of Computer Science, Bar-Ilan University,
%Ramat-Gan 52900, Israel\\
%\email{ \{kellero, kopelot, landaus2, moshe\}@cs.biu.ac.il}}

\author{Avinatan Hassidim\thanks{Research supported by ISF grant 1241/12 and by GIF Young.} \and Orgad Keller \and Moshe Lewenstein\thanks{Research supported by BSF grant 2010437, a Google Research Award and GIF grant 1147/2011.} \and Liam Roditty}

\institute{Department of Computer Science, Bar-Ilan
University, Ramat-Gan, Israel \email{\{avinatan,kellero,moshe,liamr\}@cs.biu.ac.il}}

\maketitle

\begin{abstract}
Given a weighted $n$-vertex graph $G$ with integer edge-weights taken from a range $[-M,M]$, we show that the minimum-weight simple path visiting $k$ vertices can be found in time $\tilde{O}(2^k \poly(k) M n^\omega) = O^*(2^k M)$. If the weights are reals in $[1,M]$, we provide a $(1+\varepsilon)$-approximation which has a running time of $\tilde{O}(2^k \poly(k) n^\omega(\log\log M + 1/\varepsilon))$. For the more general problem of $k$-tree, in which we wish to find a minimum-weight copy of a $k$-node tree $T$ in a given weighted graph $G$, under the same restrictions on edge weights respectively, we give an exact solution of running time $\tilde{O}(2^k \poly(k) M n^3) $ and a $(1+\varepsilon)$-approximate solution of running time $\tilde{O}(2^k \poly(k) n^3(\log\log M + 1/\varepsilon))$. All of the above algorithms are randomized with a polynomially-small error probability.
\end{abstract}

\section{Introduction}
Given an $n$-vertex graph $G=(V,E)$ and a parameter $k$, in the \emph{$k$-path} problem we wish to find a path in $G$ consisting of $k$ vertices, if such exists. 
%The problem is extremely natural and has practical applications e.g.,~\cite{ABCC02,SIKS06}. 
The $k$-path problem can be easily shown to be NP-complete: when $k=n$, it is exactly the Hamiltonian path problem. While a trivial $O^*(n^k)$ solution\footnote{Here and throughout, the $O^*$ notation discards all factors that are polynomial in $n$, $k$, and $\log M$ from the running time. Similarly, the $\tilde{O}$ expressions discard poly-logarithmic factors. 
} is to try all $\binom{n}{k}$ combinations of $k$ vertices, better can be obtained; Monien was the first to show an improvement~\cite{Monien85}, with an $O^*(k!)$ algorithm. In their seminal result, Alon, Yuster, and Zwick~\cite{AYZ95}  introduced the \emph{color-coding} technique. They used it to present a randomized $O^*((2e)^k)$ algorithm for this problem, which can be derandomized, replacing the $2e$ term with a large constant. Their result thus shows that the \textsc{logpath} problem of determining whether a graph contains a path of length $\log n$ can be solved in polynomial time.
Later, two independent results~\cite{KMRR06,CLSZ07} presented randomized $O^*(4^k)$ algorithms, again with larger constants when derandomized, having running times of $O^*(16^k)$~\cite{KMRR06} and $O^*(12.5^k)$~\cite{CLSZ07}.

While these results were combinatorial in nature, the next improvements used algebraic techniques: Koutis~\cite{Koutis08} presented an algorithm solving the problem in $O^*(2.83^k)$ time. His method was perfected by Williams~\cite{Williams09}, reducing the running time to $O^*(2^k)$. This had somewhat closed the gap between the $k$-path problem and the best method known for the specific case of finding a Hamiltonian path in a directed graph, which is $O^*(2^n)$ (though the latter is combinatorial in nature). For undirected graphs, recent results presented $O^*(1.657^n)$~\cite{Bjorklund10} and later $O^*(1.657^k)$~\cite{BHKK10,AB13}
running times for Hamiltonian path and $k$-path, respectively.

It is worthwhile to focus on Koutis' and Williams' techniques, as they are the basis to this paper. They reduce $k$-path and other problems to the problem of determining whether a given $n$-variable polynomial contains a $k$-multilinear-monomial (that is, a term which is the multiplication of $k$ distinct variables) in its sum-product expansion. The problem is then solved by (roughly) evaluating this polynomial over random values taken from an adequate choice of an algebraic structure. 
%in which non-multilinear monomials vanish, but multilinear ones survive with high probability. 
In a later result~\cite{KW09} they both show that, in the evaluation framework they use, their technique for finding a $k$-multilinear-monomial is essentially optimal, as any choice of an algebraic structure for the polynomial evaluation would require that the elements in this structure have an $\Omega(2^k/k)$-sized representation. 
%On the positive side, they also present algorithms for other problems, such as the $k$-tree problem in which one wishes, given an $n$-node graph $G$ and a $k$-node tree $T$, to find a copy of $T$ in $G$.

One of the most natural generalizations coming to mind, is the \emph{minimum-weight $k$-path} problem: in this scenario, the graph edges are weighted and we wish to find a $k$-path having minimum weight in the graph. In~\cite{Williams09} this was referred to as the \emph{short cheap tour} problem and mentioned that while the $O^*(4^k)$ methods can be easily extended to accommodate for this version, the algebraic methods do not seem to support such extension, and left this as an open problem. We solve this problem for the specific case in which the edge weights are integers in the range $[-M,M]$, incurring a running time which also has a superlinear dependency on $M$. If the weights are reals in $[1,M]$ (or can be normalized to this range, as is the case if they are in the range $[\ell,h]$ for $0 < \ell < h$), we provide a $(1+\varepsilon)$-approximation which reduces this dependency to $\log \log M$. Notice that by this we conform to the important line of research in recent years, of discussing variants of distance problems in which edge-weights are integers taken from a bounded range, see e.g.,~\cite{Zwick02,CGS12}.

Another problem that generalizes $k$-path is presented in~\cite{KW09}: in the $k$-tree problem, given an $n$-vertex graph $G$ and a $k$-node tree $T$, find a copy of $T$ in $G$. For a similar generalization of this problem to \emph{minimum-weight $k$-tree}, and under similar restrictions on the edge weights, we show similar exact and approximate results.
\paragraph{Paper Organization.}
In Section~\ref{sec:method}, we first present an $\tilde{O}(2^k \poly(k) M n^\omega )$ algorithm for computing the weight of the minimum-weight $k$-path when edge weights are integers in $[-M,M]$, where $\omega < 2.3727$ stands for the matrix multiplication exponent~\cite{Williams12}. In Section~\ref{sec:actual}, we show how to find the path itself, incurring an $O(k \cdot \poly\log  n)$ multiplicative overhead for the above algorithm. Finally, in Section~\ref{sec:approx}, for the case of real edge-weights in $[1,M]$, we provide a $(1+\varepsilon)$-approximation algorithm that reduces the dependency on $M$ to $\log\log M$, by using a technique of careful adaptive scaling of the edge weights. The overall running time of this algorithm is $\tilde{O}(2^k \poly(k) n^\omega(\log\log M + 1/\varepsilon))$. 

In Section~\ref{sec:tree} we turn to the $k$-tree problem, and show similar results: we present an $\tilde{O}(2^k \poly(k) M n^3)$ algorithm for finding the minimum-weight $k$-tree when edge weights are integers in $[-M,M]$, and for the case the edge-weights are reals in $[1,M]$, provide a $(1+\varepsilon)$-approximation algorithm having running time $\tilde{O}(2^k \poly(k) n^3(\log\log M + 1/\varepsilon))$.

\section{Preliminaries}\label{sec:prelim}
%\subsection{Algebra}\label{sec:algebra}
We follow Williams' notation~\cite{Williams09}. Let $\FF$ be a field and $G$ be a multiplicative group. The group algebra $\FF[G]$ is defined over the set of elements of the form
\begin{equation}
\sum_{g \in G}a_g g
\end{equation}
where $a_g \in \FF$ for all $g \in G$, i.e., on the set of sums of elements from $G$ with coefficients from $\FF$. 
Addition is computed component-wise as
\begin{equation}
\sum_{g \in G}a_g g + \sum_{g \in G}b_g g = \sum_{g \in G}(a_g + b_g) g \enspace,
\end{equation}
multiplication is defined in the form of a convolution:
\begin{equation}
\left(\sum_{g \in G}a_g g\right)  \left(\sum_{g \in G}b_g g\right) = \sum_{g,h \in G}a_g b_h gh = \sum_{g \in G} \left(\sum_{h \in G}  a_h b_{h^{-1} g} \right) g \enspace,
\end{equation}
(since $G$ is a multiplicative group, the expression $h^{-1} g$ here replaces the expression of the type $g-h$ which is usually found in a convolution definition) and multiplication by a scalar $c \in \FF$ as
\begin{equation}
c\left(\sum_{g \in G}a_g g\right) = \sum_{g \in G} c a_g g \enspace.
\end{equation}

Let $0_\FF, 1_\FF$ be the addition and multiplication identities of $\FF$, respectively. Let $1_G$ be the identity of $G$. It is easy to verify that $\FF[G]$ is a ring where the addition identity element $0_{\FF[G]}=\sum_{g \in G}0_\FF \cdot g$ is the element having all coefficients taken as $0_\FF$, and the multiplication identity element $1_{\FF[G]} = 1_\FF \cdot 1_G = 1_G$. For ease of notation, hereafter $0$ and $1$ will denote $0_{\FF[G]}$ and $1_{\FF[G]}$, respectively.

Let $z$ be a symbolic variable. Our computations are done on the set $(\FG)[z]$ of univariate polynomials on $z$ with coefficients in $\FF[G]$. Notice that the set of polynomials with coefficients in a ring is a ring by itself.

For our algorithm, we follow Williams and choose $G$ to be $\ZZ_2^k$ (i.e., the set of binary vectors of dimension $k$) with multiplication between elements of $\ZZ_2^k$ defined as entry-wise addition modulo $2$. It follows that $1_G$ is the $k$-dimensional all-zeros vector. Notice that for all $u,v \in \ZZ_2^k$, $u \cdot v = 1_G$ iff $u = v$. We also choose $\FF=\GF(2^\ell)$ for $\ell = \log k + 3$. Notice that since $\FF = \GF(2^\ell)$ has characteristic $2$, it holds that for all $c \in \FF$, $c + c = 0_\FF$, and therefore that for all $v \in \FG$, $v + v = 0$.

\section{Method}\label{sec:method}
Given a weighted, directed or undirected graph $H=(V,E,w)$ on the vertex-set $V = \{1,\ldots,n\}$, with integer edge-weights in $[-M,M]$, we first show how to compute the weight of the minimum-weight $k$-path with high probability. We can assume the edge weights are actually in $[0,M]$, otherwise we re-define $w(i,j) \gets w(i,j) + M$ for each $(i,j) \in E$ and then $M \gets 2M$: as this process incurs a penalty of $(k-1)M$ for each $k$-path, it maintains the order relation on $k$-path weights.
Define a \emph{$k$-walk} to be a walk in the graph comprised of $k$ (not necessarily distinct) vertices, and let $I = \langle i_1,\ldots,i_k \rangle$ be some arbitrary $k$-walk in $H$. With a slight abuse of notation, we will also use $I$ to denote the set of edges participating in the walk.

We define a collection $\{B_c\}_{c=1}^{k-1}$ of polynomial matrices $B_c$ as follows: 
\begin{equation} B_c[i,j]=\begin{cases}
y_{i,j,c} \cdot x_i \cdot z^{w(i,j)} & \text{if $(i,j) \in E$,}\\
0 & \text{if $(i,j) \notin E$;}\\
\end{cases} \end{equation} 
where each variable $y_{i,j,c}$ shall be assigned with a randomly selected value from $\FF $ and each $x_i$ will be assigned with a value chosen from $\FG$ by a method to be described shortly. Notice that each $x_i$ corresponds to vertex $i$. Assume the values $\{y_{i,j,c}\}_{{i,j,c}}$ have already been chosen. Recall that $z$ is a symbolic variable. We define the polynomial $P$ as follows: $P(x_1, \ldots, x_n, z) = \vec{1} \cdot  B_1  \cdots B_{k-1} \cdot\vec{x}$, where $\vec{1}$ is the $n$-dimensional all-ones vector and $\vec{x}$ is the vector $(x_1, \ldots, x_n)$. 
Re-writing $P$ as its sum-product expansion we get:
%\begin{equation}
%P(x_1, \ldots, x_k, z) = \sum_{\substack{i_1,\ldots,i_k\\i_1,\ldots,i_k\text{ is a walk in }G}}\prod_{c=1}^{k-1} y_{i_c, i_{c+1}, c} \cdot \prod_{c=1}^{k-1} x_{i_c}
%\end{equation}
\begin{equation}
P(x_1, \ldots, x_n, z) = \sum_{\substack{I\\I=\langle i_1,\ldots,i_k \rangle \text{ is a walk in }H}}\left(\prod_{c=1}^{k-1} B_c[i_c,i_{c+1}]\right)x_{i_k}\enspace,
\end{equation}
that is, $P$ is an aggregate sum over all $k$-walks in $H$, where each walk $I=\langle i_1,\ldots,i_k \rangle $ is represented by the product of its corresponding components in $B_1,\ldots, B_{k-1}$, finally multiplied by $x_{i_k}$ which corresponds to the final vertex of the walk. By substituting the $B_c[i_c,i_{c+1}]$'s for their values, and re-arranging the walk's product such that the $y_{i,j,c}$ terms appear first, then the $x_i$ terms, and finally the $z$ term, it follows that
\begin{equation}
P(x_1, \ldots, x_n, z) = \sum_{\substack{I\\I=\langle i_1,\ldots,i_k \rangle \text{ is a walk in }H}} y^I \cdot x^I \cdot z^{w(I)}\enspace,
\end{equation}
where $y^I = \prod_{c=1}^{k-1} y_{i_c, i_{c+1}, c}$, $x^I = x_{i_1} \cdots x_{i_k}$, and $w(I) = \sum_{e \in I}w(e)$ is the weight of walk $I$.
\subsection{Algorithm}
Given $H$, randomly choose all values $y_{i,j,c} \in \FF$, and randomly pick $n$ vectors $v_1,\ldots,v_n$ from $G=\ZZ_2^k$. Now compute the polynomial $P'(z)=P(1_G+v_1,\ldots,1_G+v_n,z)$. Let $\coeff_z^d P'(z)$ be the $d$-th degree term coefficient of $P'(z)$, and let $d'=\min\{d \mid \coeff_z^d P'(z) \text{ is not $0$} \}$ (if such exists). If $d'$ exists, return it. Otherwise output ``no $k$-path exists in $H$''.

\subsection{Proof of Correctness}
Let $I$ be the minimum-weight $k$-simple-path in $H$, and notice that $w(I)$ is represented in $P$ by the term $z^{w(I)}$ in the product corresponding to $I$. Notice that while no degrees $d < w(I)$ occur in $P$, it might be that the $w(I)$-th degree term of $P$ was eliminated when (partially) evaluating $P$. Our goal is to show that this happens with low probability. For a walk $I$, notice that if $I$ is simple, i.e., it visits every node at most once, then $x^I$ is multilinear, or equivalently, square-free, since each variable $x_i$ appears in it at most once. On the other hand, if $I$ is non-simple, then $x^I$ must contain some square $x_j^2$. Therefore, in order to prove the algorithm correct, we need to show that w.h.p., (a) products corresponding to non-simple paths vanish, (b) products corresponding to simple-$k$-paths do not vanish by their evaluation, and that (c) products corresponding to simple-$k$-paths are not eliminated when they are summed with other (same-degree) products. 

These notions are captured by the following propositions, which are similar to the ones in~\cite{Williams09}. 
Due to lack of space and for completeness, proofs are detailed in the appendix.
%Due to lack of space we defer the proofs to the full version of the paper.

\begin{proposition}\label{pro:vanish}
If $x^I$ is non-multilinear, it vanishes.
\end{proposition}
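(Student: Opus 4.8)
The plan is to evaluate $x^I$ directly under the substitution $x_i \mapsto 1_G + v_i$ and show that a single repeated vertex forces the whole product to be $0$. First I would recall that $x^I$ being non-multilinear means that some vertex $j$ occurs at least twice in the walk $I$, so that $x_j^2$ divides $x^I$ in the polynomial ring; since $G = \ZZ_2^k$ is abelian, $\FG$ is a commutative ring, and hence so is $(\FG)[z]$, so we may factor $x^I = x_j^2 \cdot r$ for some product $r$ of the remaining variables. Thus it suffices to show that the value assigned to $x_j^2$ is $0$.

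The core computation is to expand $(1_G + v_j)^2$ in $\FG$. Using the convolution-style multiplication of the group algebra,
\begin{equation}
(1_G + v_j)^2 = 1_G \cdot 1_G + 1_G \cdot v_j + v_j \cdot 1_G + v_j \cdot v_j = 1_G + v_j + v_j + (v_j \cdot v_j) \enspace.
\end{equation}
Here I would invoke the two facts highlighted in the Preliminaries: because the group operation on $\ZZ_2^k$ is entry-wise addition modulo $2$, we have $v_j \cdot v_j = 1_G$ (equivalently, $u \cdot v = 1_G$ iff $u = v$); and because $\FF = \GF(2^\ell)$ has characteristic $2$, every element of $\FG$ satisfies $w + w = 0$. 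Therefore $1_G + v_j + v_j + 1_G = (1_G + 1_G) + (v_j + v_j) = 0 + 0 = 0$, i.e.\ $(1_G + v_j)^2 = 0$.

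Combining the two parts, the value of $x^I$ under the substitution is $0 \cdot (\text{value of } r) = 0$, as an element of $(\FG)[z]$ (and in particular every $z$-degree coefficient of it is $0$); this is exactly the claim that $x^I$ vanishes. I do not expect a genuine obstacle here: the statement is essentially the characteristic-$2$, exponent-$2$ identity $(1+v)^2 = 0$ in $\FG$, and the only point requiring care is bookkeeping the cancellations $1_G + 1_G = 0$ and $v_j + v_j = 0$ correctly and noting commutativity of $\FG$ so that the repeated factor can indeed be isolated regardless of where in the walk the repetition occurs.
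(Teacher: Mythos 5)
Your proof is correct and follows essentially the same route as the paper's: isolate a repeated factor $x_j^2$, substitute $x_j \mapsto 1_G + v_j$, and compute $(1_G+v_j)^2 = 0$ using $v_j \cdot v_j = 1_G$ together with the characteristic-$2$ cancellation. The only cosmetic difference is that you explicitly flag the commutativity of $\FG$ so the square can be factored out of $x^I$, which the paper takes as implicit.
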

%\begin{proof}
%Assume $x^I$ contains some square $x_j^2$. Since $x_j$ was assigned with $1_G+v_j$, it holds that $x_j^2 = (1_G+v_j)^2 = 1_G^2 + 2 \cdot 1_G\cdot v_j + v_j^2 = 1_G + 2 \cdot  1_G \cdot v_j + 1_G = 2 \cdot  1_G+ 2 \cdot  1_G\cdot v_j = 0+0=0$ where the third equality holds since for all $v \in G$, $v\cdot v = 1_G$, and the fifth equality holds since $\FF$ has characteristic $2$ and therefore for all $c \in \FF$, $2c = 0_\FF$.
%\qed\end{proof}

Let $J=\sum_{v \in G}v$ be the sum of all vectors from $G=\ZZ_2^k$ (addition here is the addition of $\FG$).

\begin{proposition}\label{thr:independent}
Let $I = \langle i_1,\ldots,i_k \rangle$ be a $k$-walk. If $x^I$ is multilinear (i.e., $I$ is a $k$-path), then if the vectors $v_{i_1}, \ldots , v_{i_k} \in \ZZ_2^k$ are linearly independent w.r.t.\ entry-wise addition modulo $2$, then $x^I = J$.
\end{proposition}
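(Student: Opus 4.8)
The plan is to compute $x^I = \prod_{j=1}^k (1_G + v_{i_j})$ directly by expanding the product. Since $I$ is a $k$-path, the indices $i_1,\ldots,i_k$ are distinct, so the substituted values $x_{i_j} = 1_G + v_{i_j}$ are $k$ (not necessarily distinct) elements of $\FG$, and we want to show that under the linear-independence hypothesis the product collapses to $J = \sum_{v \in G} v$.

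First I would expand the product over all subsets:
\[
x^I = \prod_{j=1}^{k}(1_G + v_{i_j}) = \sum_{S \subseteq \{1,\ldots,k\}} \prod_{j \in S} v_{i_j} = \sum_{S \subseteq \{1,\ldots,k\}} \left(\sum_{j \in S} v_{i_j}\right),
\]
where the inner product $\prod_{j \in S} v_{i_j}$ is taken in the group $G = \ZZ_2^k$, hence equals the sum (mod $2$) of the vectors indexed by $S$, with the empty product giving $1_G$. So $x^I$ is the formal sum, over all $2^k$ subsets $S$, of the vectors $\sum_{j\in S} v_{i_j}$.

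Next I would invoke linear independence of $v_{i_1},\ldots,v_{i_k}$ over $\ZZ_2$: the map $S \mapsto \sum_{j \in S} v_{i_j}$ from subsets of $\{1,\ldots,k\}$ (equivalently, from $\ZZ_2^k$ viewed as coefficient vectors) to $G = \ZZ_2^k$ is then a bijection, because $k$ linearly independent vectors in a $k$-dimensional space form a basis, so every element of $G$ is hit exactly once. Therefore the multiset $\{\sum_{j \in S} v_{i_j} : S \subseteq \{1,\ldots,k\}\}$ is exactly the set of all elements of $G$, each with multiplicity one, and thus $x^I = \sum_{v \in G} v = J$.

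The only subtlety — and the one point I would state carefully rather than the main obstacle — is that coefficients live in $\FF = \GF(2^\ell)$, which has characteristic $2$, so that $v + v = 0$ in $\FG$; this is what guarantees the expansion has no sign issues and that coincidental collisions among the $\sum_{j\in S} v_{i_j}$ would cancel in pairs (though under linear independence there are none). I would note that Proposition~\ref{pro:vanish} is the complementary case where independence fails; here independence makes the subset-sum map a bijection and the result is immediate. There is no real obstacle: the argument is a one-line expansion followed by the basis/bijection observation.
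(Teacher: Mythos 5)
Your proof is correct and matches the paper's argument: both expand $\prod_{j=1}^k(1_G + v_{i_j})$ over subsets and observe that linear independence makes the $k$ vectors a basis of $\ZZ_2^k$, so the subset products enumerate all of $G$ exactly once, giving $J$. The bijection phrasing you use is just a slightly more explicit statement of the paper's ``the sum covers all vectors in the span of $B$'' step.
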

%\begin{proof}
%If the $k$ vectors $v_{i_1}, \ldots , v_{i_k} \in \ZZ_2^k$ are linearly-independent, then they form a basis $B = \{v_{i_1}, \ldots , v_{i_k}\}$ for $\ZZ_2^k$. Notice that $x^I = \prod_{c=1}^k (1_G + v_{i_c}) = \sum_{S \subseteq B}\prod_{v \in S} v$, i.e., $x^I$ is the sum of every possible combination of vectors from $B$, multiplied together. Hence, the sum covers all vectors in the span of $B$, that is, $\sum_{S \subseteq B}\prod_{v \in S} v = \sum_{v \in \SP(B) }v = \sum_{v \in \ZZ_2^k}v = J$.
%\qed\end{proof}

\begin{corollary}\label{cor:2}
Let $I = \langle i_1,\ldots,i_k \rangle$ be a $k$-walk. If $x^I$ is multilinear (i.e., $I$ is a $k$-path), then with probability at least $0.28$ it does not vanish.
\end{corollary}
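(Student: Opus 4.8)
The plan is to combine Proposition~\ref{thr:independent} with a bound on the probability that $k$ uniformly random vectors in $\ZZ_2^k$ are linearly independent. By Proposition~\ref{thr:independent}, if $I = \langle i_1,\ldots,i_k \rangle$ is a $k$-path and the vectors $v_{i_1},\ldots,v_{i_k}$ are linearly independent over $\GF(2)$, then $x^I = J = \sum_{v \in G} v$. So the first step is to observe that $J \neq 0$: indeed, the number of vectors in $\ZZ_2^k$ is $2^k$, which is even, and each nonzero coordinate pattern is shared by an even number of vectors — more directly, by symmetry each coordinate of $J$ equals $2^{k-1} \cdot 1_\FF \bmod 2$, but we are summing in $\FG$, so the coefficient $a_v$ of each $v \in G$ in $J$ is exactly $1_\FF$, and since $2^k \geq 2$ there is at least one such term; hence $J \neq 0$. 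Wait — more carefully, $J$ has every coefficient equal to $1_\FF \neq 0_\FF$, so $J \neq 0_{\FG}$ outright. Thus on the event that $v_{i_1},\ldots,v_{i_k}$ are independent, $x^I = J \neq 0$, i.e., $x^I$ does not vanish.

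The second step is the counting bound. The probability that $k$ independently, uniformly random vectors in $\ZZ_2^k$ are linearly independent is
\begin{equation}
\prod_{j=0}^{k-1} \frac{2^k - 2^j}{2^k} = \prod_{j=1}^{k} \left(1 - 2^{-j}\right) \enspace.
\end{equation}
This infinite product converges to a constant (the $q$-Pochhammer value $\prod_{j \geq 1}(1 - 2^{-j}) \approx 0.2887\ldots$), and the finite products are decreasing in $k$ and bounded below by this limit; since $0.2887\ldots > 0.28$, the probability is at least $0.28$ for every $k$. One can make this elementary by noting $\prod_{j=1}^{k}(1-2^{-j}) \geq 1 - \sum_{j=1}^{\infty} 2^{-j} = 0$ is too weak, so instead use the standard refinement $\prod_{j \geq 1}(1 - 2^{-j}) \geq (1-2^{-1})(1-2^{-2})\prod_{j\geq 3}(1-2^{-j})$ and bound the tail via $\prod_{j \geq 3}(1-2^{-j}) \geq 1 - \sum_{j \geq 3} 2^{-j} = 1 - 2^{-2} = 3/4$, giving $\geq \tfrac12 \cdot \tfrac34 \cdot \tfrac34 = \tfrac{9}{32} = 0.28125 > 0.28$.

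Combining: with probability at least $0.28$ over the choice of $v_1,\ldots,v_n$, the particular vectors $v_{i_1},\ldots,v_{i_k}$ associated with the path $I$ are linearly independent, and on that event $x^I = J \neq 0$, so $x^I$ does not vanish. I expect the only mild subtlety — hardly an obstacle — is being careful that the nonvanishing we need is of the $\FG$-element $x^I$ itself (before the $y^I$ and $z^{w(I)}$ factors are attached and before summation with other walks, which are handled in later propositions), and that the independence bound must hold uniformly in $k$ rather than just asymptotically; the tail estimate above settles the latter cleanly.
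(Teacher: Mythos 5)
Your proof is correct and takes essentially the same approach as the paper: both reduce the claim to the probability that $k$ uniformly random vectors in $\ZZ_2^k$ are linearly independent and then invoke Proposition~\ref{thr:independent}. The paper simply cites~\cite{BK95} for the $0.28$ bound, whereas you derive it explicitly via $\prod_{j=1}^{k}(1-2^{-j}) \geq \tfrac{1}{2}\cdot\tfrac{3}{4}\cdot\bigl(1-\sum_{j\geq 3}2^{-j}\bigr) = \tfrac{9}{32} > 0.28$ and you also make explicit the (correct, if routine) observation that $J\neq 0$ because every group element carries coefficient $1_\FF$; both additions are sound and make the argument self-contained.
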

%\begin{proof}
%The values $v_{i_1}, \ldots , v_{i_k} \in \ZZ_2^k$ where chosen randomly and independently. It is known that a random $k \times k$ matrix of values from $\ZZ_2$ has full rank with probability at least $0.28$~\cite{BK95}.
%\qed\end{proof}

We have shown that with at least constant probability, multilinear terms do not vanish when they are assigned values as described. However, it still might happen that such multilinear terms will get eliminated when they are summed up with other multilinear terms. The next two propositions show that this can happen with at most constant probability.

\begin{proposition}\label{thr:dependent}
Let $I = \langle i_1,\ldots,i_k \rangle$ be a $k$-walk. If the variables $v_{i_1}, \ldots , v_{i_k} \in \ZZ_2^k$ are linearly dependent w.r.t.\ entry-wise addition modulo $2$, then $x^I$ vanishes.
\end{proposition}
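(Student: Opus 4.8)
The plan is to expand $x^I$ directly in the group algebra $\FG$ and then exhibit a fixed-point-free involution on the terms of this expansion that pairs up \emph{equal} group elements; since $\FF$ has characteristic~$2$, each such pair cancels and $x^I = 0$ follows.

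First I would distribute $x^I = \prod_{j=1}^{k}(1_G + v_{i_j})$ to obtain
\[
x^I \;=\; \sum_{S \subseteq \{1,\ldots,k\}}\ \prod_{j \in S} v_{i_j},
\]
in which every term carries coefficient $1_\FF$. Since multiplication in $G=\ZZ_2^k$ is entry-wise addition modulo~$2$, the group element $\prod_{j \in S} v_{i_j}$ is precisely the XOR $\bigoplus_{j\in S} v_{i_j}$; in particular the empty product is $1_G$, the all-zeros vector. I would then bring in the hypothesis: linear dependence of $v_{i_1},\ldots,v_{i_k}$ over $\ZZ_2$ supplies a \emph{nonempty} set of indices $T\subseteq\{1,\ldots,k\}$ with $\bigoplus_{j\in T} v_{i_j} = 1_G$. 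Consider the map $S\mapsto S\triangle T$ (symmetric difference) on subsets of $\{1,\ldots,k\}$: it is an involution, and because $T\neq\emptyset$ it is fixed-point-free, so it partitions the $2^k$ subsets into pairs. For each $S$,
\[
\bigoplus_{j\in S\triangle T} v_{i_j} \;=\; \Bigl(\bigoplus_{j\in S} v_{i_j}\Bigr)\oplus\Bigl(\bigoplus_{j\in T} v_{i_j}\Bigr) \;=\; \Bigl(\bigoplus_{j\in S} v_{i_j}\Bigr)\oplus 1_G \;=\; \bigoplus_{j\in S} v_{i_j},
\]
so the terms of the expansion indexed by $S$ and by $S\triangle T$ are the \emph{same} element $g\in G$. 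Hence each pair contributes $1_\FF\cdot g + 1_\FF\cdot g = (1_\FF+1_\FF)\,g = 0$ to the sum, and summing over all pairs gives $x^I = 0$.

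The only point that needs care---and the closest thing to an obstacle---is keeping the two layers of ``addition'' straight: the group operation of $\ZZ_2^k$, which is what the product of the $v_{i_j}$'s unfolds into (producing symmetric differences of index sets), versus addition in $\FG$, where characteristic~$2$ is what actually annihilates the matched pairs. Once these are separated, the remainder is the standard pairing argument; note that it never uses that $I$ is simple, only the stated linear dependence.
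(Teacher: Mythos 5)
Your proposal is correct and takes essentially the same approach as the paper: both expand $x^I$ over subsets, take a nonempty dependence set $T$ with $\bigoplus_{j\in T}v_{i_j}=1_G$, pair each $S$ with $S\triangle T$, and cancel using characteristic~$2$. Your phrasing via a fixed-point-free involution is a clean restatement of the paper's ``every value occurs twice'' argument, and you are slightly more careful in noting explicitly that $T\neq\emptyset$ is what makes the pairing fixed-point-free.
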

%\begin{proof}
%Recall that $x^I = \sum_{S \subseteq \{v_{i_1},\ldots,v_{i_k}\}}\prod_{v \in S} v$. If the $k$ vectors $v_{i_1}, \ldots , v_{i_k} \in \ZZ_2^k$ are linearly-dependent, then there exists a set $T \subseteq \{v_{i_1}, \ldots , v_{i_k}\}$ such that $\prod_{v \in T} = 1_G$. Since, as mentioned, for $u,v \in G$ it holds that $uv = 1_G$ iff $u = v$, we get that for all $S' \subseteq T$, $\prod_{v \in S'}v = \prod_{v \in T\setminus S'}v$. It follows that every value $r = \prod_{v \in S}v$ occurs twice in the sum, one time as $r = \prod_{v \in S}v$, and one time as $r = \prod_{v \in (S \setminus T) \cup (T \setminus S)}v$. Since $2r=0_\FF \cdot r$ as $\FF$ has characteristic $2$, all terms are eliminated in the sum.
%\qed\end{proof}

Recall that $P(x_1, \ldots, x_n, z)$ is a polynomial in $z$ and therefore can be viewed as
\begin{equation}
P(x_1, \ldots, x_n, z) = \sum_{d=0}^{kM}\sum_{\substack{I\\I=\langle i_1,\ldots,i_k \rangle \text{ is a walk in }H\\w(I)=d}} y^I \cdot x^I \cdot z^{d}\enspace.
\end{equation}
It is therefore easy to see that the minimum-degree term in $P$ corresponds to minimum-weight $k$-paths in $H$. Let $d'$ be the minimum degree of $P$ and let 
\begin{equation}
\coeff_z^{d'} P(x_1, \ldots, x_n, z) = \sum_{\substack{I\\I \text{ is a walk in }H\\w(I)=d'}} y^I \cdot x^I
\end{equation}
 be its corresponding coefficient.
Our goal is to show that with at least constant probability, $\coeff_z^{d'} P$ does not vanish when it is evaluated.
\begin{proposition}\label{pro:high-pro}
$\coeff_z^{d'} P'(z)$ does not vanish with probability at least $1/5$.
\end{proposition}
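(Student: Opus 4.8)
The plan is to combine the earlier propositions to control both vanishing of individual multilinear terms and cancellation among them. First I would recall the setup: $\coeff_z^{d'} P' = \sum_{I : w(I)=d'} y^I x^I$ where the sum is over all $k$-walks $I$ of weight $d'$, and the $x_i$ have been substituted by $1_G + v_i$. By Proposition~\ref{pro:vanish}, any non-multilinear $x^I$ (i.e.\ any non-simple walk) contributes $0$ after the partial evaluation, so only simple $k$-paths of weight $d'$ survive. By Proposition~\ref{thr:dependent}, any simple $k$-path whose vectors $v_{i_1},\dots,v_{i_k}$ are linearly dependent also contributes $0$. By Proposition~\ref{thr:independent}, for every surviving $k$-path $I$ (simple, with independent vectors) we have $x^I = J$, a \emph{fixed} element of $\FG$ independent of $I$. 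Hence
\begin{equation}
\coeff_z^{d'} P' = \Bigl(\sum_{\substack{I \text{ simple }k\text{-path},\ w(I)=d'\\ v_{i_1},\dots,v_{i_k} \text{ lin.\ indep.}}} y^I\Bigr) \cdot J \enspace.
\end{equation}
So the coefficient is nonzero iff the scalar-in-$\FF$ aggregate $S := \sum_I y^I$ is nonzero and $J \neq 0$.

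Next I would dispose of $J \neq 0$: $J = \sum_{v\in\ZZ_2^k} v$, and by symmetry each coordinate of this sum is $\sum_{v} v_j = 2^{k-1}\cdot 1_\FF = 0_\FF$ only if $\FF$ has characteristic $2$... wait — actually since $\FF$ has characteristic $2$, every coordinate sum $2^{k-1} = 0$, so $J = 0$! That cannot be, so instead I should recall that in Williams' construction $J$ is in fact handled differently: the relevant claim from Proposition~\ref{thr:independent} must be that $x^I$ equals a fixed nonzero value; more carefully, $x^I = \prod_j (1_G + v_{i_j})$, and when the $v_{i_j}$ span a $k$-dimensional space this product equals $\sum_{v\in\ZZ_2^k} v = J$; this $J$ is nonzero because it is the formal sum of all group elements each with coefficient $1_\FF \ne 0_\FF$ — it is \emph{not} a scalar multiple of anything, it is a single group-algebra element with all $2^k$ coefficients equal to $1_\FF$, hence $J \neq 0$. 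Good. So the only thing left is to show $S = \sum_I y^I \neq 0_\FF$ with probability at least $1/5$, where the sum is over the nonempty-with-good-probability family of simple $d'$-weight $k$-paths with independent vectors.

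The argument for $S \neq 0$ has two independent sources of randomness. Condition first on the choice of $v_1,\dots,v_n$. By Corollary~\ref{cor:2}, a fixed minimum-weight simple $k$-path $I_0$ has its vectors linearly independent with probability at least $0.28$; so with probability $\ge 0.28$ the index set of the sum $S$ is nonempty. Now fix such a $v$-assignment and view $S$ as a polynomial in the remaining variables $\{y_{i,j,c}\}$. Each surviving path $I$ contributes the monomial $y^I = \prod_{c=1}^{k-1} y_{i_c,i_{c+1},c}$, and distinct $k$-paths give distinct monomials (the ordered edge sequence is recoverable from the multiset of indexed variables, since the $c$-th factor pins down the $c$-th edge). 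Hence $S$ is a nonzero polynomial of degree $k-1$ in the $y$-variables, so by the Schwartz–Zippel lemma, since each $y$ is chosen uniformly from $\FF = \GF(2^\ell)$ with $|\FF| = 2^{\log k + 3} = 8k$, we get $\Pr[S = 0 \mid v] \le (k-1)/(8k) < 1/8$. Removing the conditioning: $\Pr[S \neq 0] \ge 0.28 \cdot (1 - 1/8) = 0.28 \cdot 7/8 = 0.245 \ge 1/5$, and since $J \neq 0$, $\coeff_z^{d'} P' = S\cdot J \neq 0$ with probability at least $1/5$.

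The main obstacle is bookkeeping rather than any deep idea: one must argue carefully that distinct surviving $k$-paths map to distinct $y$-monomials so that no cancellation is forced algebraically (the matrices $B_c$ carry the layer index $c$ precisely to guarantee this), and one must correctly chain the two probabilistic bounds — the constant $0.28$ coming from the linear-independence event of Corollary~\ref{cor:2}, and the Schwartz–Zippel bound over $\FF$ — while checking that the chosen field size $|\FF| = 8k$ makes the product of the two success probabilities exceed $1/5$. A minor subtlety worth stating explicitly is that the event ``$I_0$'s vectors are independent'' already implies, via Propositions~\ref{thr:independent} and~\ref{thr:dependent}, that all $x^I$ for surviving $I$ collapse to the \emph{same} $J$, which is what lets us factor $J$ out and reduce the whole question to nonvanishing of the scalar polynomial $S$.
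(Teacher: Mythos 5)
Your proof is correct and follows essentially the same route as the paper's: factor out $J$ via Propositions~\ref{thr:independent} and~\ref{thr:dependent}, use Corollary~\ref{cor:2} to get a surviving minimum-weight path with probability at least $0.28$, then apply Schwartz--Zippel over $\GF(2^\ell)$ to bound the cancellation probability by $1/8$ and multiply. Your explicit remark that the layer index $c$ makes distinct surviving paths contribute distinct $y$-monomials (so that the $y$-polynomial is genuinely nonzero once one path survives) is a useful clarification that the paper leaves implicit.
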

%\begin{proof}
%By Theorems~\ref{thr:independent} and~\ref{thr:dependent}, it holds that 
%$$\coeff_z^{d'}P'(z) = J \cdot \sum_{\substack{I\\I \text{ is a walk in }H\\w(I)=d'\text{ and $I$ survived}}} y^I \ .$$
%Let $$Q = \sum_{\substack{I\\I \text{ is a walk in }H\\w(I)=d'\text{ and $I$ survived}}} y^I \ .$$
%$Q$ is a degree-$k$ polynomial in the variables $\{y_{i,j,c}\}_{i,j,c}$. With probability at least $0.28$ at least one minimum-weight $k$-path $I$ had survived and therefore $Q$ is not identically zero. In this case, by the Schwartz-Zippel lemma~\cite{Schwartz80,Zippel79,DL78}, when assigning random values from $\GF(2^\ell)$ to the variable set $\{y_{i,j,c}\}_{i,j,c}$, $Q$ evaluates to zero with probability at most $k/2^\ell=1/8$. Therefore $Q$ (and hence, $\coeff_z^{d'} P'(z)$) does not vanish with probability at least $0.28 \cdot 7/8 > 1/5$.
%\qed\end{proof}
\subsection{Running Time Analysis}
The running time of the algorithm is dominated by $k$ matrix multiplications, where the basic arithmetic operations are done over the polynomial ring $(\FG)[z]$.  Therefore, we need to account for the the cost of each such operation. Notice that for any arithmetic operation in $(\FG)[z]$ performed by our algorithm, the maximum degree of the operand polynomials and resulting polynomial, is at most $kM$. 
We can therefore focus on the set $R$ of polynomials in $(\FG)[z]$ with degree at most $kM$. By treating the polynomials in $R$ as periodic with period $kM$ (since there will be no carry or overflow to greater degrees), $R$ continues to be a ring. 
Let $T$ be the upper-bound on the time required for an arithmetic operation in $R$; trivially, $T = \Omega(2^k\cdot kM \log\abs{\FF})$. It follows that the algorithm requires $O(k n^\omega T)$ time, and it remains to compute $T$.

\paragraph{Addition.} Addition of two polynomials can be easily done component-wise in time $O(kM \cdot 2^k \cdot \log \abs{\FF}) = O(2^k \poly(k) M)$. 
\paragraph{Multiplication.} Multiplication is trickier and is done by employing a multidimensional fast Fourier transform-type approach.\footnote{Here, as opposed to Williams~\cite{Williams09}, the Walsh-Hadamard transform is not an adequate choice anymore due to the existence of the variable $z$ which can have a degree up to $kM$. } We now describe the multiplication process in more detail.

The multiplication process will be easier to describe on the ring $\FF[\ZZ_2^k\times [kM]]$ which is isomorphic to $R$, as will be shown immediately. Given a vector $v =(v_1,\ldots,v_k) \in \ZZ_2^k$ and an integer $d \in [kM]$, let $(v;d)$ denote the vector $(v_1,\ldots,v_k,d)\in \ZZ_2^k\times [kM]$. A polynomial $p \in R$ can be uniquely described as a sum $\sum_{v,d}a_{(v;d)}\cdot(v;d)$ of at most $N = 2^k kM$ summands, where each $a_{(v;d)} \in \FF$ is the coefficient of $v$ appearing in $\coeff_z^d p$ (i.e.,  if $\coeff_z^d p = \sum_{v \in G} b_v v$, then $a_{(v;d)} = b_v$). Our definition of multiplication over $G=\ZZ_2^k$ can be naturally extended to $\ZZ_2^k\times [kM]$: multiplication still corresponds to entry-wise addition, only that now addition is done modulo $2$ for dimensions $1,\ldots,k$ and modulo $kM$ for dimension $k+1$. With that in mind, our definitions of addition, multiplication, and identity elements for $R$ are extended appropriately, thus forming the ring $\FF[\ZZ_2^k\times [kM]]$. The bottom line is that now any $p \in R$ can be viewed as a sum of elements with coefficients taken from a multidimensional array indexed by values from $\ZZ_2^k\times [kM]$ and that multiplication is still a convolution, an important fact to be used later.
%\footnote{Another isomorphism can be described by seeing $p$ as a periodic, multivariate polynomial over $k+1$ variables $w_1,\ldots,w_k, z$, with a period of $2$ with respect to the first $k$ dimensions, and a period $kM$ with respect to $k+1$-th. $(v;d)$ therefore corresponds to the monomial $w_1^{v_1}\cdots w_k^{v_k}\cdot z^d$, and $a_{(v;d)}$ is the coefficient of that monomial.}

Moving to $\FF=\GF(2^\ell)$, being a finite field, all elements in $\FF$ can be represented in the usual manner as a degree-$\ell$ polynomials with coefficients in $\ZZ_2=\GF(2)$ and operations that are done modulo some predefined irreducible polynomial of degree $\ell$ (this irreducible polynomial can even be found na\"{\i}vely as $\ell=\log k + 3$). For the purpose of using FFT,  we treat polynomials in $\ZZ_2[x]$ as if they were actually in $\CC[x]$, i.e., the set of univariate polynomials over the complex numbers. At the end of the multiplication process, we will appropriately convert polynomials in $\CC[x]$ back to $\GF(2^\ell)$ as will be described shortly.

By the above arguments, given two polynomials $p,q \in R$ to be multiplied, they can be taken as the sums $\sum_{v,d}p_{(v;d)}\cdot(v;d)$ and $\sum_{v,d}q_{(v;d)}\cdot(v;d)$, respectively, where $p_{(v;d)}, q_{(v;d)} \in \CC[x]$ for each $v\in \ZZ_2^k$ and $d\in [kM]$. As the multiplication corresponds to a convolution, by the convolution theorem, it holds that $p * q = \DFT^{-1}(\DFT(p) \cdot \DFT(q))$, where $*$ denotes a convolution, $\cdot$ denotes point-wise multiplication, and $\DFT$ denotes the $(k+1)$-dimensional discrete Fourier transform for values indexed by vectors of type $(v_1,\ldots,v_k,d)\in \ZZ_2^k\times [kM]$. 
Let $D(\ell)$ denote the time required for an arithmetic operation on degree-$\ell$ polynomials in $\CC[x]$---including converting them back to $\GF(2^\ell)$ by division by an irreducible polynomial---and notice that $D(\ell) = O(\ell^2) = O(\poly\log k)$ as multiplication and division here are quadratic by nature.
Then the above DFT operations can be computed efficiently in time $O(N \log N \cdot D(\ell)) = \tilde{O}(2^k k^2 M)$ by using the multidimensional FFT algorithm.
Once we have computed $\DFT(p)$ and $\DFT(q)$, thus obtaining for each of them $N$ values in $\CC[x]$ (indexed as well by vectors in $\ZZ_2^k\times [kM]$), we point-wise multiply them, obtaining a sum $w = \DFT(p) \cdot \DFT(q)$, and compute $\DFT^{-1}(w)$, again by using FFT on multidimensional coefficients in $\CC[x]$. Finally, we reduce $\CC[x]$ terms (which are actually in $\ZZ[x]$, as convolution over integer values returns integer values) by dividing them by the irreducible polynomial used before and the appropriate modulo operations. 
 
We conclude that multiplication of polynomials in $R$ can be performed in time $\tilde{O}(2^k \poly(k) M )$, and therefore $T = \tilde{O}(2^k \poly(k) M)$.

\section{Finding the Actual Path}\label{sec:actual}
Let $G=(V,E,w)$ be a weighted graph with integer edge-weights in $[-M,M]$. Given the algorithm from the previous section, we show that it is possible to find the minimum-weight $k$-path itself with only $O(k \poly\log n)$ multiplicative overhead w.r.t.\ the previous algorithm and with a polynomially small error probability. 
We denote by $\mathcal{A}$ the algorithm from the previous section, amplified by running $O(\log n)$ iterations of it and choosing the minimal result, such that its error probability is bounded by $1/n^{c'}$ for some constant $c'$. 
%We first run $\mathcal{A}(G,k)$ on the graph. Let $d$ be the value returned by it, i.e., the weight of the minimum-weight $k$-path. 
The algorithm for finding the actual path uses $\mathcal{A}$ as a sub-routine. Its pseudo-code is provided as Algorithm~\ref{alg:finding}. 
The rest of this section is deferred to Section~\ref{sec:actual:app} of the appendix due to lack of space.
%Full analysis is deferred to the full version of the paper.

\begin{algorithm}[t]\label{alg:finding}
\caption{Finding the minimum-weight $k$-path.}
$d \gets \mathcal{A}(G,k)$\;
\While{$\abs{V(G)} > 10k$}{
\For{$\Theta(\log n)$ times} 
	{
	$G' \gets $ a copy of $G$ in which each vertex is removed with probability $1/k$\;
	\If{at least $\Omega(\abs{V(G)}/k)$ were removed and $\mathcal{A}(G',k)=d$}{
	$G \gets G'$\;
	Go to the while loop\;
	}
	
	}
	\Return{``Fail''}
}
\ForEach{remaining vertex $v \in V(G)$ and until $\abs{V(G)}=k$}{
	$G' \gets G \setminus v$\tcc*{$G \setminus v$ is $G$ with $v$ and its incident edges removed}
	\lIf{$\mathcal{A}(G',k)=d$}{$G \gets G'$}
}
\Return{$E(G)$}

\end{algorithm}

\section{Approximation}\label{sec:approx}
The main drawback of the previous algorithm is that its running time has a superlinear dependency in $M$, the bound on an edge weight. If the weights are in $[1,M]$ (or can be normalized to this range), we show that if we settle for a $(1+\varepsilon)$-approximation algorithm to the problem, this dependency can be brought down to $\log\log M$, by using a technique of careful adaptive scaling of the edge weights, thus bringing the overall running time to $\tilde{O}(2^k \poly(k) n^\omega(\log\log M + 1/\varepsilon))$. Our techniques are in the spirit of the FPTAS of Erg{\"u}n et al.~\cite{ESZ02} for the restricted shortest path problem.  
We start with the following proposition:

\begin{proposition}\label{lemma:bound}
Given a graph $G$ with integer edge-weights in $[0,M]$, a parameter $k$, and a value $B$, it is possible to find an exact solution to the minimum-weight $k$-path problem of weight at most $B$, if such exists, or to return that no such solution exists, in time $\tilde{O}(2^k \poly(k) B n^\omega) = O^*(2^k B)$ and polynomially-small error probability.\footnote{$B$ does not have to be an integer, but the effect in this case is as if $\lfloor B \rfloor$ is used.}
\end{proposition}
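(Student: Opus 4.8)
The plan is to reduce to the algorithm of Section~\ref{sec:method} run on a pruned graph. First delete from $G$ every edge of weight strictly larger than $B$, obtaining a graph $G'$ on the same vertex set. Since edge weights are nonnegative, any $k$-path of weight at most $B$ uses only edges of weight at most $B$ and hence survives in $G'$, while deleting edges can only destroy $k$-paths, never create them. Therefore the $k$-paths of $G$ of weight $\le B$ are exactly the $k$-paths of $G'$ of weight $\le B$; in particular $G$ has a $k$-path of weight $\le B$ if and only if $G'$ does, and whenever this holds a minimum-weight $k$-path of $G'$ is also a minimum-weight $k$-path of $G$ among those of weight $\le B$. After the pruning every edge weight of $G'$ is at most $\lfloor B\rfloor$ (weights are integers), which is exactly what will let the running time depend on $B$ rather than on $M$.

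Now run the algorithm of Section~\ref{sec:method} on $G'$, amplified to error probability $n^{-c}$ by $\Theta(\log n)$ independent repetitions. With high probability it reports either ``no $k$-path'' or a value $d'$ equal to the weight of a minimum-weight $k$-path of $G'$. If it reports ``no $k$-path'' or reports $d'>B$, then by the reduction $G$ has no $k$-path of weight $\le B$ and we output so. If it reports $d'\le B$, then $d'$ is the sought weight, and a corresponding path is recovered by running Algorithm~\ref{alg:finding} of Section~\ref{sec:actual} on $G'$ with the above amplified procedure in the role of $\mathcal{A}$, at the same $O(k\poly\log n)$ overhead as there. Since the returned $d'$ is an integer, the test ``$d'\le B$'' is equivalent to ``$d'\le\lfloor B\rfloor$'', which accounts for the footnote.

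For the running time, every $k$-walk of $G'$ has at most $k-1$ edges, each of weight at most $\lfloor B\rfloor$, so every polynomial arising in the computation of Section~\ref{sec:method} on $G'$ has $z$-degree at most $(k-1)\lfloor B\rfloor<kB$. We may therefore instantiate the ring $R$ of Section~\ref{sec:method} with period $\Theta(kB)$ (say the smallest power of two exceeding $(k-1)\lfloor B\rfloor$) in place of $kM$; since no term ever reaches this period there is no wraparound, so all correctness arguments there---Propositions~\ref{pro:vanish}--\ref{pro:high-pro} included---which concern the true polynomial $P'(z)$, are unaffected. The multiplication analysis of Section~\ref{sec:method} then carries over verbatim with $M$ replaced by $B$: an element of $R$ is a coefficient array indexed by $\ZZ_2^k\times[\Theta(kB)]$ over $\GF(2^\ell)$ with $N=\Theta(2^k kB)$ coefficients, and a multiplication costs $O(N\log N\cdot D(\ell))=\tilde O(2^k\poly(k)B)$ via the same multidimensional FFT and $\GF(2^\ell)\leftrightarrow\CC[x]$ bookkeeping. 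Hence one run takes $O(kn^\omega)$ times this, and together with the $\Theta(\log n)$ amplification and the $O(k\poly\log n)$ overhead for recovering the path, the total is $\tilde O(2^k\poly(k)Bn^\omega)=O^*(2^k B)$.

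The one new idea needed is the (easy) observation that capping the weight of the sought solution at $B$ lets us cap every relevant edge weight at $B$ by deleting the rest, which in turn caps every polynomial degree occurring in the algorithm at $O(kB)$; beyond that the argument is a verbatim re-run of Sections~\ref{sec:method} and~\ref{sec:actual} with $M$ replaced by $B$, so I do not anticipate any real obstacle past checking these bookkeeping points.
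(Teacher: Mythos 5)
Your proof is correct and takes essentially the same approach as the paper: prune away edges of weight exceeding $B$, then rerun the algorithm of Section~\ref{sec:method} with the degree bound on the polynomial ring replaced by a function of $B$ rather than $M$. The one cosmetic difference is that the paper additionally truncates every intermediate polynomial at degree $B$ during multiplication (keeping the ring period at $B$), whereas you omit the truncation and instead observe that after pruning all relevant degrees are automatically bounded by $(k-1)\lfloor B\rfloor$, so a period of $\Theta(kB)$ suffices; since the extra factor of $k$ is absorbed into $\poly(k)$, both give the same $\tilde O(2^k\poly(k)Bn^\omega)$ bound.
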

\begin{proof}
The algorithm is identical to the previous one, except that as a first step, edges of weight greater than $B$ are deleted from the graph, and that when multiplying two polynomials in $(\FG)[z]$ of degree at most $B$, we truncate from the resulting polynomial any term of degree greater than $B$, thus keeping all polynomials throughout the algorithm at degree of at most $B$. As every polynomial multiplication now takes $\tilde{O}(2^k \poly(k) B)$ time, the running time analysis follows.
\qed\end{proof}
We denote with $\mathcal{B}$ the algorithm that finds an exact solution to the $k$-path problem of weight at most $B$, if such exists, or to returns that no such solution exists. We will use it as a sub-routine in our approximation algorithm.

Define $k'=k-1$ (the number of edges in a $k$-path), and let $OPT$ be the minimum-weight $k$-path. Our approximation algorithm starts by defining an upper and a lower bound, $U$ and $L$, respectively, to the weight of $OPT$. At first, $U=k'M$ and $L=k'$. It then iteratively fine-tunes $U$ and $L$ to the point where the ratio $U/L$ is less than or equal to $2$, while maintaining the invariant that $L \leq w(OPT) \leq U$. This fine tuning is done as follows. 

At each iteration we let the value $X=\sqrt{LU}$ be the geometric mean of $L$ and $U$, and define the value $\delta = (L/U)^{1/3}-\sqrt{L/U}$ which will serve as a scaling coefficient. Notice that $\delta > 0$ as $U>L$. We then scale-down the edge weights by a factor of $\delta U / k'$, thus defining a new weight $w'(i,j) = \left\lfloor{\frac{w(i,j)}{\delta U /k'}}\right\rfloor$ for each edge $(i,j)$, and let $G'=(V,E,w')$ be the graph with the new weights. 
Ideally, we would like to test whether the weight of the optimal solution is less than or greater than $X$ by calling $\mathcal{B}(G', k,   \frac{X}{\delta U/k'})$; here notice that the value $\frac{X}{\delta U/k'}$ is the scaled-down equivalent of $X$ in $G'$.
%, as $X$ pertains to the weight of a $k$-path and not of a single edge.
However, while the scaling guarantees that this test can be done without incurring a high running time cost, it also introduces a loss of precision due to the floor function in the scaling: define $\weff(i,j)=(\delta U /k')w'(i,j)$ as the effective weight $w'(i,j)$ simulates, then we have that $\weff(i,j) \leq w(i,j) \leq \weff(i,j) + \delta U /k'$, and therefore for a $k$-path $P$, we have that $\weff(P) \leq w(P) \leq \weff(P) + \delta U$. Therefore, in the case $w'(OPT) >   \frac{X}{\delta U/k'}$ we have that $w(OPT) \geq \weff(OPT) > X$, but if $w'(OPT) \leq   \frac{X}{\delta U/k'}$ (and therefore $\weff(OPT) \leq X$) then all we can assert is that $w(OPT) \leq X + \delta U$.
Therefore, a $k$-path returned by a call to $\mathcal{B}(G', k,   \frac{X}{\delta U/k'})$ has weight at most $X + \delta U$ (and not $X$) w.r.t.\ the original graph. According to the outcome of the call to  $\mathcal{B}(G', k,   \frac{X}{\delta U/k'})$, we redefine $U$ and $L$: if $\mathcal{B}(G', k,   \frac{X}{\delta U/k'})$ returned a result, we set $U \gets X+\delta U$; otherwise we set $L \gets X$. 

When the main loop is done (convergence is shown to exist below), we again redefine a new weight function: $w'(i,j) = \left\lfloor{\frac{w(i,j)}{\varepsilon L /k'}}\right\rfloor$ for each edge $(i,j)$, the graph $G'=(V,E,w')$, and return the result of a call to $\mathcal{B}(G',k,  \frac{U}{\varepsilon L/k'})$. The full algorithm pseudo-code is given as Algorithm~\ref{alg:approx}.

\begin{algorithm}[t]\label{alg:approx}
\caption{Approximation algorithm.}
$k' \gets k-1$\;
$L \gets k'$\;
$U \gets k'M$  \;
\While{$U > 2L$}{
$X \gets \sqrt{LU}$\;
$\delta \gets (L/U)^{1/3}-\sqrt{L/U}$\;
Define $w' \colon E \to \mathbb{N}$ such that $w'(i,j) = \left\lfloor{\frac{w(i,j)}{\delta U /k'}}\right\rfloor$\;
$G' \gets (V,E,w')$\;
\eIf{$\mathcal{B}(G', k,   \frac{X}{\delta U/k'})$ returns a result}{$U \gets X+\delta U$\;}{$L \gets X$\;}
}
Define $w' \colon E \to \mathbb{N}$ such that $w'(i,j) = \left\lfloor{\frac{w(i,j)}{\varepsilon L /k'}}\right\rfloor$\;\label{line:weights-final}
$G' \gets (V,E,w')$\;
\Return{$\mathcal{B}(G',k,  \frac{U}{\varepsilon L/k'})$}\label{line:return}
\end{algorithm}

\paragraph{Running-Time.} 
%Throughout the execution, the algorithm maintains the invariant that $L < X < X+\delta U< U$. That can be easily seen by substituting $X$ and $\delta$ for their values and observing that $L < \sqrt{LU} < L^{1/3}U^{2/3} < U$. Therefore by setting either $U \gets X+\delta U$ or $L \gets X$ we strictly reduce the ratio $U/L$ and thus the algorithm converges.
We first show that the main loop performs $O(\log\log M)$ iterations. Let $L_i,U_i$ be the respective values of $L,U$ at the start of iteration $i$; we will show that $U_{i+1}/L_{i+1} \leq (U_{i}/L_{i})^{2/3}$. At the end of each iteration $i$, we have that either $L_{i+1}\gets L_i$ and $U_{i+1}\gets X+\delta U_i$, or that $L_{i+1}\gets X$ and $U_{i+1}\gets U_i$, where $X=\sqrt{L_iU_i}$ and $\delta = (L_i/U_i)^{1/3}-\sqrt{L_i/U_i}$. In the former case we have that 
\begin{equation}
\frac{U_{i+1}}{L_{i+1}}=\frac{X+\delta U_i}{L_i}=\frac{\sqrt{L_iU_i} + \left(\left(\frac{L_i}{U_i}\right)^{1/3}-\sqrt{\frac{L_i}{U_i}}\right)U_i}{L_i} \\ =\frac{\left(\frac{L_i}{U_i}\right)^{1/3}U_i}{L_i}=\left(\frac{U_i}{L_i}\right)^{2/3}\enspace,
\end{equation}
and in the latter
\begin{equation}
\frac{U_{i+1}}{L_{i+1}}=\frac{U_i}{X}=\frac{U_i}{\sqrt{L_iU_i}}=\sqrt{\frac{U_i}{L_i}} \leq \left(\frac{U_i}{L_i}\right)^{2/3}\enspace.
\end{equation}
In both cases we have that $U_{i+1}/L_{i+1} \leq (U_{i}/L_{i})^{2/3}$. Therefore it converges to a constant after $O(\log\log M)$ iterations. Notice that an invocation of $\mathcal{B}(G', k,  \frac{X}{\delta U/k'})$ costs $\tilde{O}(2^k \poly(k) n^\omega)$ by Proposition~\ref{lemma:bound}, with the bound $B=  \frac{X}{\delta U/k'}$ which is $O(k)$, as $\delta U = \Omega(X)$. We conclude that the overall cost of the main loop is $\tilde{O}(2^k \poly(k) n^\omega \log\log M)$.

As for the final call to $\mathcal{B}(G',k,  \frac{U}{\varepsilon L/k'})$, we have that its running time is $\tilde{O}(2^k \poly(k) n^\omega /\varepsilon)$ by Proposition~\ref{lemma:bound}, with the bound $B=  \frac{U}{\varepsilon L/k'}$ which is $O(k/\varepsilon)$ since at this stage $U \leq 2L$. We conclude that the overall running time of the approximation algorithm is $\tilde{O}(2^k \poly(k) n^\omega(\log\log M + 1/\varepsilon))$.

\paragraph{Correctness.} 
Throughout the execution, the algorithm maintains the invariant that $L < X < X+\delta U< U$. That can be easily seen by substituting $X$ and $\delta$ for their values and observing that $L < \sqrt{LU} < L^{1/3}U^{2/3} < U$. 
Assume there exist a $k$-path in $G$, and let $OPT$ be the minimum-weight $k$-path. By the scaling arguments, and the fact that we have brought the loss of precision due to scaling into consideration when redefining $U$ and $L$, we have that the invariant $L \leq w(OPT) \leq U$ always holds. Due to the running-time argument, when the main loop is done we have $U/L \leq 2$. Let $P^*$ be the result of the call to  
$\mathcal{B}(G',k,  \frac{U}{\varepsilon L/k'})$ 
at line~\ref{line:return} of the pseudo-code, and notice the the weights defined at line~\ref{line:weights-final} incur an $\varepsilon L /k'$ loss of precision per edge, or equivalently $\varepsilon L $ per $k$-path. By the call to the exact algorithm, we have that $w'(P^*) \leq w'(OPT)$ 
and therefore also $\weff(P^*) \leq \weff(OPT)$. Accounting for the loss of precision, we have that $w(P^*) \leq \weff(P^*) + \varepsilon L \leq \weff(OPT) + \varepsilon L \leq (1+\varepsilon)w(OPT)$.

\section{$k$-tree}\label{sec:tree}
In~\cite{KW09}, they provide a solution to the $k$-tree problem: given an $n$-vertex graph $G$ and a $k$-node tree $T$, is there a (not necessarily induced) copy of $T$ in $G$. Again their solution is based on a reduction to the question of is there a $k$-multilinear-monomial in the sum-product expansion of a given polynomial. We show how to handle the \emph{minimum-weight $k$-tree} problem---in which we are given a weighted graph $G$, and wish to find a minimum-weight copy of $T$ in it, across all copies of $T$ in it---again, when the weights are integers in a given range $[-M,M]$. 

\begin{theorem}\label{thr:tree}
Given a graph $G$, if the edge-weights are integers in $[-M,M]$, the minimum-weight $k$-tree can be found in $\tilde{O}(2^k \poly(k) M n^3)$ time. If the edge-weights are reals in $[1,M]$, the problem can be approximated within $(1+\varepsilon)$ in $\tilde{O}(2^k \poly(k) n^3(\log\log M + 1/\varepsilon))$ time.
\end{theorem}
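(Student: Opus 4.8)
The plan is to lift the Koutis--Williams reduction for (unweighted) $k$-tree~\cite{KW09} to a weighted, $z$-graded version, mirroring what Section~\ref{sec:method} does for $k$-path. Recall that~\cite{KW09} construct, from $G$ and the pattern tree $T$, a polynomial $Q(x_1,\ldots,x_n)$ in variables attached to the vertices of $G$ whose sum-product expansion ranges over all homomorphisms $\phi\colon V(T)\to V(G)$ (``branching walks of shape $T$''), each contributing $\prod_{t\in V(T)}x_{\phi(t)}$ times a random field coefficient; a copy of $T$ in $G$ corresponds exactly to a multilinear such monomial, i.e.\ to an injective $\phi$, and $Q$ is evaluated over $\FF[\ZZ_2^k]$ by a dynamic program with $O(k)$ stages (one per node of a rooted $T$) in $O^*(2^k n^3)$ time. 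First I would augment this construction in two ways: (i) wherever a stage of the dynamic program ``uses an edge $(i,j)$ of $G$'' in order to attach a node $t$ of $T$ to its parent, multiply in a factor $z^{w(i,j)}$; and (ii) index the random field coefficient of that edge-use also by the node $t$ at which the edge is used (this is the role of Williams' index $c$ in the $k$-path case). Writing $P(x_1,\ldots,x_n,z)$ for the resulting polynomial and associating to each branching walk $W$ its homomorphism $\phi_W$, the sum-product expansion becomes $P=\sum_W y^W\cdot x^W\cdot z^{w(W)}$, where $x^W=\prod_{t\in V(T)}x_{\phi_W(t)}$, $w(W)=\sum_{(s,t)\in E(T)}w(\phi_W(s)\phi_W(t))$, and $y^W$ is the product of the $k-1$ position-indexed coefficients of the edge-uses of $W$; as before, $x^W$ is multilinear iff $\phi_W$ is injective, i.e.\ iff $W$ is a genuine copy of $T$. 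The algorithm then evaluates $P$ at $x_i\gets 1_G+v_i$ for random $v_i\in\ZZ_2^k$ and random field values of the $y$'s (after the same non-negativity shift as in Section~\ref{sec:method}, which merely adds a fixed $(k-1)M$ to the weight of every copy of $T$), and returns the least $z$-degree whose coefficient in $P'(z)$ is nonzero.

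For correctness I would argue exactly as in Section~\ref{sec:method}, and the single structural fact that makes this possible is that $W\mapsto y^W$ is injective on branching walks of shape $T$: since $T$ is fixed, reading off from $y^W$, for every non-root node $t$, the pair $\bigl(\phi_W(\mathrm{parent}(t)),\phi_W(t)\bigr)$ reconstructs all of $\phi_W$. Granting this, Propositions~\ref{pro:vanish}--\ref{pro:high-pro} carry over essentially verbatim, since they concern only the group-algebra evaluation of a single monomial $x^W$ (a product of $k$ distinct variables when $W$ is a copy of $T$, exactly as for a $k$-path) or a Schwartz--Zippel argument in the $y$'s. Concretely: let $d^\star$ be the minimum weight of a copy of $T$ in $G$; every branching walk of weight $<d^\star$ is non-injective, so all monomials of $\coeff_z^d P$ for $d<d^\star$ are non-multilinear and vanish under the evaluation (Proposition~\ref{pro:vanish}), whence $\coeff_z^d P'(z)=0$. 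At $d=d^\star$ a minimum-weight copy of $T$ contributes a multilinear monomial that, with constant probability over the $v_i$, survives as a nonzero multiple of $J$ (Corollary~\ref{cor:2}); by the injectivity of $W\mapsto y^W$ its $y$-monomial is produced by no other branching walk, so $\coeff_z^{d^\star}P'(z)$, viewed as a polynomial in the $y$'s, is not identically zero, and a Schwartz--Zippel argument over $\FF=\GF(2^\ell)$ with $\ell=\log k+3$, as in Proposition~\ref{pro:high-pro}, shows $\coeff_z^{d^\star}P'(z)\neq 0$ with at least constant probability. Running $O(\log n)$ independent repetitions makes the error polynomially small.

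For the running time, the algorithm would perform $O(k)$ dynamic-programming stages, each consisting of $O(n^3)$ arithmetic operations in the ring $R$ of polynomials over $\FF[\ZZ_2^k]$ of degree at most $kM$ (when $T$ is a path these are precisely the matrix multiplications of Section~\ref{sec:method}, and $n^3$ improves to $n^\omega$); by the analysis there each such operation costs $\tilde O(2^k\poly(k)M)$, for a total of $\tilde O(2^k\poly(k)Mn^3)$. To recover the tree itself, I would reuse the random-vertex-deletion self-reduction of Section~\ref{sec:actual} with the present algorithm as the sub-routine $\mathcal{A}$ --- a minimum-weight copy of $T$ spans exactly $k$ vertices and deleting a vertex never decreases the optimum, so the argument applies unchanged --- together with an additional pass that removes edges one at a time as long as a minimum-weight copy of $T$ survives, after which the $k$-vertex graph left is exactly such a copy; the $O(k\poly\log n)$ overhead is absorbed into the stated bound. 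Finally, for the approximation I would follow Section~\ref{sec:approx} verbatim: the bounded-weight exact variant that deletes every edge of weight $>B$ and truncates each polynomial in $(\FG)[z]$ to degree $B$ runs in $\tilde O(2^k\poly(k)Bn^3)$ --- the analogue of Proposition~\ref{lemma:bound} --- and since a $k$-tree, like a $k$-path, has exactly $k-1$ edges, the starting bounds $L=k-1$ and $U=(k-1)M$ are valid, so the adaptive-scaling search of Algorithm~\ref{alg:approx}, which invokes this sub-routine only as a black box, yields a $(1+\varepsilon)$-approximation in time $\tilde O(2^k\poly(k)n^3(\log\log M+1/\varepsilon))$.

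The step I expect to be the main obstacle is controlling the multiplicity with which the Koutis--Williams dynamic program ``discovers'' a fixed copy of $T$: there are $|\mathrm{Aut}(T)|$ embeddings realizing it, and distinct embeddings can even share a vertex set, so in characteristic $2$ one must preclude their wholesale cancellation. This is exactly what the position-indexing of step (ii) provides --- it makes $W\mapsto y^W$ injective, so a fixed $y$-monomial determines a unique homomorphism and hence a unique $x$-monomial, and no surviving multilinear term can be annihilated by another. A smaller point to check carefully is that the~\cite{KW09} reduction really does ``use an edge of $G$'' in a well-defined, per-node-of-$T$ fashion, so that attaching the $z^{w(i,j)}$ factors and the position index is unambiguous; granting that, everything else is routine given Sections~\ref{sec:method}--\ref{sec:approx}.
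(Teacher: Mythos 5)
Your proposal follows essentially the same route as the paper: lift the Koutis--Williams $k$-tree arithmetic circuit to a $z$-graded version, weight each edge-use by $z^{w(j,j')}$, index the random $\FF$-coefficients by the pair (tree edge, graph edge) so that $W\mapsto y^W$ is injective, evaluate over $(\FG)[z]$ at $x_i\gets 1_G+v_i$, and reuse Propositions~\ref{pro:vanish}--\ref{pro:high-pro}, Proposition~\ref{lemma:bound}, and Algorithm~\ref{alg:approx} as black boxes; the paper states the recursion $C_{T,i,j}$ explicitly with exactly this indexing and then asserts that the earlier analysis carries over, deferring details. Your discussion of why the position-indexing forestalls cancellation between automorphic embeddings in characteristic $2$, and your sketch of recovering the tree itself, are points the paper leaves implicit, but they are consistent with its construction and do not constitute a different method.
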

%Due to lack of space, discussion of Theorem~\ref{thr:tree} can be found in the appendix.
%\paragraph{Discussion of Theorem~\ref{thr:tree}.}
Let $N_G(i)$ be the neighbor-set of vertex $i$ in $G$, and let $X=\{x_1,\ldots,x_n\}$ be a variable-set corresponding to $V(G)$. We use the following polynomial on $X$, implemented as an arithmetic circuit:

Let $V(G)=[n]$ and $V(T)=[k]$. The polynomial $C_{T,i,j}(x_1,\ldots,x_n)$ is defined as follows. If $\abs{V(T)}=1$, then $C_{T,i,j}=x_j$. Otherwise, $C_{T,i,j}$ is defined recursively:  let $\{T_{i,\ell} \mid \ell \in N_T(i)\}$ be the subtrees of $T$ created by removing node $i$ from $T$, where $T_{i,\ell}$ is the subtree containing $\ell$. Then 
\begin{equation}
C_{T,i,j} = \prod_{\ell \in N_T(i)}\left( \sum_{j'\in N_G(j)} y_{(i,\ell), (j,j')} \cdot z^{w(j,j')} C_{T_{i,\ell},\ell,j'} \right)\enspace,
\end{equation}
where as before, $z$ is a symbolic variable, and the values $\{y_{e, e'} \mid e \in E(T), e' \in E(G)\}$ are random values drawn from $\FF$.\footnote{In~\cite{KW09}, the $y$-values are implicit and come from the multiplication of the output of each multiplication gate with a random value taken from $\FF$.}
Finally, define the polynomial $Q = \sum_{j\in V(G)} C_{T,1,j}$. Each $C_{T,1,j}$ is a circuit containing at most $\abs{E(T)}\cdot \abs{E(G)}$ addition and multiplication gates and therefore $Q$ contains $n \cdot \abs{E(T)}\cdot \abs{E(G)} = O(n^3 k)$ such gates. 
$Q$ is a sum over all homomorphisms from $T$ to subgraphs of $G$ of size at most $k$: specifically $C_{T,i,j}$ aggregates over all homomorphisms that map $i\in V(T)$ to $j \in V(G)$ (proof can be found in~\cite{KW09}\footnote{Their arithmetic circuit is defined as $Q = \sum_{i \in V(T),j\in V(G)} C_{T,i,j}$, however, it seems to contain redundancy.}). Therefore, a monomial $x_{j_1}\cdots x_{j_k}$ appears in the sum-product expansion of $Q$ if an only if there is a homomorphism mapping $V(T)$ to $\{j_1,\ldots, j_k\}$ such that if $(i,\ell)\in E(T)$, then $(j_i, j_\ell) \in E(G)$. If such a monomial is multilinear, it corresponds to such a homomorphism in which $j_1,\ldots, j_k$ are distinct vertices, i.e., a vertex in $G$ was not used more than once for the sake of a single mapping. From this point, the same algorithms given before follow (only this time, evaluating $Q$ over $(\FG)[z]$), and propositions similar to Propositions~\ref{pro:vanish}--\ref{pro:high-pro} apply. Full proofs are deferred to the full version of the paper. We obtain that the minimum-weight $k$-tree problem with integer edge-weights in $[-M,M]$ can be solved in $\tilde{O}(2^k \poly(k) M n^3)$ time and that if the edge-weights are reals in $[1,M]$, it can be approximated within $(1+\varepsilon)$ in $\tilde{O}(2^k \poly(k) n^3(\log\log M + 1/\varepsilon))$ time.
\section{Acknowledgments}
We would like to thank Ryan Williams and Danny Raz for helpful comments.

%\section{Conclusions} 
%We have presented algorithms for finding a minimum-weight $k$-path and a minimum-weight $k$-tree in a weighted graph, in which edge weights are integers in $[-M,M]$, and a $(1+\varepsilon)$-approximation algorithms for the same problems, for when the edge-weights are reals in $[1,M]$. As the $O^*(4^k)$ algorithms for $k$-path can be extended to solve the minimum-weight variant without any restriction on the edge-weights, the obvious challenge is to do better, and to obtain an $o^*(4^k)$ algorithm that has no dependence on the edge weights.

\bibliographystyle{abbrv}
\bibliography{range}

\section*{Appendix}

\paragraph{Proof of Proposition~\ref{pro:vanish}.}
Assume $x^I$ contains some square $x_j^2$. Since $x_j$ was assigned with $1_G+v_j$, it holds that $x_j^2 = (1_G+v_j)^2 = 1_G^2 + 2 \cdot 1_G\cdot v_j + v_j^2 = 1_G + 2 \cdot  1_G \cdot v_j + 1_G = 2 \cdot  1_G+ 2 \cdot  1_G\cdot v_j = 0+0=0$ where the third equality holds since for all $v \in G$, $v\cdot v = 1_G$, and the fifth equality holds since $\FF$ has characteristic $2$ and therefore for all $c \in \FF$, $2c = 0_\FF$.\qed

\paragraph{Proof of Proposition~\ref{thr:independent}.}
If the $k$ vectors $v_{i_1}, \ldots , v_{i_k} \in \ZZ_2^k$ are linearly-independent, then they form a basis $B = \{v_{i_1}, \ldots , v_{i_k}\}$ for $\ZZ_2^k$. Notice that $x^I = \prod_{c=1}^k (1_G + v_{i_c}) = \sum_{S \subseteq B}\prod_{v \in S} v$, i.e., $x^I$ is the sum of every possible combination of vectors from $B$, multiplied together. Hence, the sum covers all vectors in the span of $B$, that is, $\sum_{S \subseteq B}\prod_{v \in S} v = \sum_{v \in \SP(B) }v = \sum_{v \in \ZZ_2^k}v = J$.
%we cover all vectors in the span of $\{v_{i_1},\ldots,v_{i_k}\}$ which is exctly $\ZZ_2^k$.
\qed

\paragraph{Proof of Corollary~\ref{cor:2}.}
The values $v_{i_1}, \ldots , v_{i_k} \in \ZZ_2^k$ were chosen randomly and independently. It is known that a random $k \times k$ matrix of values from $\ZZ_2$ has full rank with probability at least $0.28$~\cite{BK95}.
\qed

\paragraph{Proof of Proposition~\ref{thr:dependent}.}
Recall that $x^I = \sum_{S \subseteq \{v_{i_1},\ldots,v_{i_k}\}}\prod_{v \in S} v$. If the $k$ vectors $v_{i_1}, \ldots , v_{i_k} \in \ZZ_2^k$ are linearly-dependent, then there exists a set $T \subseteq \{v_{i_1}, \ldots , v_{i_k}\}$ such that $\prod_{v \in T} = 1_G$. Since, as mentioned, for $u,v \in G$ it holds that $uv = 1_G$ iff $u = v$, we get that for all $S' \subseteq T$, $\prod_{v \in S'}v = \prod_{v \in T\setminus S'}v$. It follows that every value $r = \prod_{v \in S}v$ occurs twice in the sum, one time as $r = \prod_{v \in S}v$, and one time as $r = \prod_{v \in (S \setminus T) \cup (T \setminus S)}v$. Since $2r=0_\FF \cdot r$ as $\FF$ has characteristic $2$, all terms are eliminated in the sum.
\qed

\paragraph{Proof of Proposition~\ref{pro:high-pro}.}
By Propositions~\ref{thr:independent} and~\ref{thr:dependent}, it holds that 
$$\coeff_z^{d'}P'(z) = J \cdot \sum_{\substack{I\\I \text{ is a walk in }H\\w(I)=d'\text{ and $I$ survived}}} y^I \ .$$
Let $$Q = \sum_{\substack{I\\I \text{ is a walk in }H\\w(I)=d'\text{ and $I$ survived}}} y^I \ .$$
$Q$ is a degree-$k$ polynomial in the variables $\{y_{i,j,c}\}_{i,j,c}$. With probability at least $0.28$ at least one minimum-weight $k$-path $I$ had survived and therefore $Q$ is not identically zero. In this case, by the Schwartz-Zippel lemma~\cite{Schwartz80,Zippel79,DL78}, when assigning random values from $\GF(2^\ell)$ to the variable set $\{y_{i,j,c}\}_{i,j,c}$, $Q$ evaluates to zero with probability at most $k/2^\ell=1/8$. Therefore $Q$ (and hence, $\coeff_z^{d'} P'(z)$) does not vanish with probability at least $0.28 \cdot 7/8 > 1/5$.
\qed

\subsection{Finding the Actual Path}\label{sec:actual:app}
Let $G=(V,E,w)$ be a weighted graph. We first run $\mathcal{A}(G,k)$ on the graph. Let $d$ be the value returned by it, i.e., the weight of the minimum-weight $k$-path. 

If $\abs{V} > 10k$, repeat the following procedure $\Theta(\log n)$ times:\footnote{For the sake of brevity, in this section we do not give full details of the underlying constants that are required.} remove each of the graph vertices with probability $1/k$. If $\Omega(\abs{V}/k)$ vertices were removed, run $\mathcal{A}$ on the resulting graph and $k$. If the algorithm had returned a result $d' = d$, then keep the vertices discarded indefinitely and stop, otherwise return them back to the graph. If after the $\Theta(\log n)$ iterations no vertices were discarded indefinitely, output ``Fail''. 

The above procedure is repeated as long as $\abs{V} > 10k$. Once $\abs{V} \leq 10k$, we perform an ordinary self reduction: each time we remove a different vertex and query $\mathcal{A}$ with the resulting graph and $k$; if the result stays the same, we keep this vertex discarded, otherwise, we return it to the graph. Once $\abs{V}=k$, we return the edge-set $E$ as the resulting path. This algorithm's pseudo-code is given as Algorithm~\ref{alg:finding}.

\paragraph{Error probability.} Let $P$ be the minimum-weight $k$-path in $G$, and assume $k \geq 3$, otherwise the problem is trivial.
Let $T$ be the set of vertices removed from $G$ in an iteration of the for loop. 
The probability $T$ does not include any of the vertices of $P$ is $(1-1/k)^k \geq 1/4$. Now assume it does not, in that case it holds that $E[\abs{T}] = \frac{\abs{V(G)}-k}{k} \geq \frac{9\abs{V(G)}}{10k}$, and that $Var[\abs{T}]=(\abs{V(G)}-k)(1/k)(1-1/k) < \abs{V(G)}/k$. According to Chebyshev's inequality, $\abs{T} = \Omega (\abs{V(G)}/k)$ with probability of at least a constant. It follows that the probability to pick $T$ that does not hit any of the vertices in $P$ and at the same time is $\Omega (\abs{V(G)}/k)$ is at least a constant $\alpha > 0$. 
%the pro
%$\abs{V(G)}/k$ random vertices not to hit any of the vertices in $P$ is given by 
%$$\prod_{i=1}^{\abs{V(G)}/k} \left(1- \frac{k}{\abs{V(G)} - i} \right) \geq  \left(1- \frac{k}{\abs{V(G)}/2} \right)^{\abs{V(G)} / k} \geq 1/16$$
%(since $\abs{V} > 2k$ and since the function $(1-1/t)^t$ is monotonically increasing). 
We define this event as a ``success''. Since we perform at most $\Theta(\log n)$ trials at each iteration of the while loop, the probability of failing in all of them is $(1-\alpha)^{\Theta(\log n)}$ which can be made at most  $ 1/n^c$ for some constant $c$. By using the union-bound over the $k \ln n$ iterations of the while loop, we get a polynomially-small error probability of at most $k \ln n / n^c$. Since the probability to fail any invocation of $\mathcal{A}$ is less than $1 / n^{c'}$, by a similar union-bound argument the probability to fail in any of the calls to $\mathcal{A}$ is  $O(k \log^2 n / n^{c'})$. We obtain an overall polynomially-small error probability.

\paragraph{Running time.} Each non-failed iteration of the while loop in Algorithm~\ref{alg:finding} discards $\Omega(\abs{V(G)}/k)$ vertices and therefore reduces the number of vertices in the graph by a multiplicative factor of $(1-\Omega(1/k))$. As this happens until $\abs{V(G)} \leq 10k$, $O(k\ln n)$ iterations are enough for getting the number of vertices to $10k$.
%, as $n \cdot (1-\Omega(1/k))^{O(k \ln n)} \leq e^{-\ln n}n \leq 2k$. 
As each iteration invoked $\mathcal{A}$ at most $O(\log n)$ times, the $O(k \poly\log n)$ multiplicative factor follows for this stage of the algorithm. As the for-each loop incurs only $O(k) < O(k \poly\log n)$ calls to $\mathcal{A}$, the running-time analysis follows.

\end{document}